\numberwithin{equation}{section}
\numberwithin{equation}{section}
\newcommand{\bE}{{\bf E}}
\newcommand{\bP}{{\bf P}}
\newcommand{\cT}{\mathcal{T}}
\renewcommand{\epsilon}{\varepsilon}
\def\<{\mathopen{}\left<}
\def\>{\right>\mathclose{}}
\def\({\mathopen{}\left(}
\def\){\right)\mathclose{}}
\newtheorem{theorem}{Theorem}
\newtheorem{example}{Example}
\newtheorem{proposition}{Proposition}
\newtheorem{remark}{Remark}
\numberwithin{equation}{section}
\author{Gon\c calo Oliveira}
\address{Universidade Federal Fluminense}
\email{galato97@gmail.com}
\title{Early epidemic spread, percolation and Covid-19}
\date{April 2020}
\begin{document}
	
	\begin{abstract}
		Human to human transmissible infectious diseases spread in a population using human interactions as its transmission vector. The early stages of such an outbreak can be modeled by a graph whose edges encode these interactions between individuals, the vertices. This article attempts to account for the case when each individual entails in different kinds of interactions which have therefore different probabilities of transmitting the disease. The majority of these results can be also stated in the language of percolation theory.
		
		The main contributions of the article are: (1) Extend to this setting some results which were previously known in the case when each individual has only one kind of interactions. (2) Find an explicit formula for the basic reproduction number $R_0$ which depends only on the probabilities of transmitting the disease along the different edges and the first two moments of the degree distributions of the associated graphs. (3) Motivated by the recent Covid-19 pandemic, we use the framework developed to compute the $R_0$ of a model disease spreading in populations whose trees and degree distributions are adjusted to several different countries. In this setting, we shall also compute the probability that the outbreak will not lead to an epidemic. In all cases we find such probability to be very low if no interventions are put in place.
	\end{abstract}

\maketitle

\tableofcontents




\section{Introduction}

The vector by which several infectious diseases propagate in a population is the human to human interaction. It is therefore natural to model their spread using such interactions as the ``basic mechanism''. In general, it gives rise to a dynamic process which evolves in time with its early stages being reasonably well approximated by considering the patient zero as the root (apex) of a tree whose branches encode the interactions through which the disease can propagate. In such epidemiological and percolation problems it is commonly assumed that each individual has an equal probability to transmit the disease to any of its contacts. However, this is an over-simplification of the actual situation as the same person can have several classes of interactions. Namely, it is conceivable that it is more likely that an infected individual will transmit the disease to someone with which it maintains a close familiar relation than to someone which it sporadically meets. In this setting we shall encode these different probabilities of transmitting the disease by distinct trees $\cT_1 , \ldots , \cT_n$ all with the same root which is the patient zero. To each tree $\cT_i$ we associated a probability $T_i$ of transmitting the disease along an interaction modeled by the corresponding tree. In the case when the $T_i$ are all the same for each vertex (individual) it is known that the basic reproduction number $R_0$ controls the possibility of almost surely avoiding an epidemic, see \cite{Brauer,1,2,3,4} and \cite{Callaway} which phrases these results in terms of the percolation interpretation. When there are different $T_i$ a similar framework can be used to prove the following result.

\begin{theorem}\label{thm:Main_1}
	Suppose the basic reproduction number $R_0<1$, then the outbreak will almost surely, i.e. with probability one, not lead to an epidemic. On the other hand, if $R_0>1$ there is still a nonzero probability $\bP \in (0,1)$ that the outbreak will be contained. 
\end{theorem}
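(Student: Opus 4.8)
The plan is to recognize the early spread as a multi-type Galton--Watson branching process and to reduce the two assertions to the classical extinction dichotomy for such processes. Because the contact structure is modelled by the trees $\cT_1,\dots,\cT_n$, which are acyclic by hypothesis, no infected individual is ever re-encountered along a second path; hence the genealogy of infections \emph{is} a branching process exactly, with no mean-field approximation needed. First I would fix the type of an infected individual to record the tree $\cT_i$ along whose edge it was infected (with the root treated as a distinguished initial particle), and I would write down, for each type, the probability generating function of its offspring. This generating function is obtained from the degree distribution of $\cT_i$ by thinning each potential transmission independently with its probability $T_i$, and by passing to the excess (size-biased) degree for the edge through which the individual was itself infected; this size-biasing is precisely what makes the second moments of the degree distributions enter. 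Differentiating the generating functions at $\mathbf 1$ produces the nonnegative \emph{next-generation matrix} $M=(m_{ij})$, and I would identify $R_0$ with its Perron--Frobenius eigenvalue $\rho(M)$, matching the explicit moment formula computed elsewhere in the paper.

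For the subcritical statement, suppose $R_0=\rho(M)<1$. A first-moment argument suffices: the expected vector of infected individuals in generation $k$ is $\be_0 M^{k}$ for the appropriate initial vector $\be_0$, and since $\rho(M)<1$ its total mass is of order $R_0^{k}\to 0$. By Markov's inequality the probability that generation $k$ is nonempty is at most this expected total mass, so it tends to $0$; hence the process dies out with probability one and no epidemic occurs almost surely.

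For the supercritical statement, suppose $R_0=\rho(M)>1$. Here I would invoke the standard theorem that, for a positively regular and non-singular multi-type branching process, the vector of extinction probabilities $\bq$ is the minimal fixed point in $[0,1]^n$ of the offspring generating map $\bG$, and that $\bq<\mathbf 1$ componentwise precisely when $\rho(M)>1$. This gives survival with positive probability. On the other hand, because every transmission probability satisfies $T_i<1$ and every vertex has finite degree, each particle has a strictly positive probability of producing no offspring at all; consequently the minimal fixed point is strictly positive, $\bq>\mathbf 0$. The containment probability $\bP$ asserted in the theorem is the extinction probability of the process started from the root, obtained by evaluating the root's offspring generating function at $\bq$; since $\mathbf 0<\bq<\mathbf 1$ and the root has positive probability both of infecting no one and of infecting someone, this value again lies strictly between $0$ and $1$.

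The main obstacle I anticipate is not the probabilistic dichotomy itself, which is classical, but the careful bookkeeping needed to (i) write the offspring generating functions correctly, keeping the root's full degree distribution distinct from the excess-degree distribution seen by every later particle, and (ii) verify that the resulting mean matrix $M$ has spectral radius equal to the explicitly defined $R_0$, together with the positive-regularity and non-singularity hypotheses required so that the clean threshold $\bq<\mathbf 1\iff\rho(M)>1$ applies. One must also check that the borderline case $\rho(M)=1$ yields $\bq=\mathbf 1$, so that $R_0=1$ is grouped with the subcritical regime, which is exactly why the statement separates $R_0<1$ from $R_0>1$.
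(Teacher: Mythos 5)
Your overall strategy---recast the early spread as a multi-type Galton--Watson process, one type per tree $\cT_i$ recording the edge along which an individual was infected, and then invoke the classical extinction dichotomy---is a genuinely different and in principle more careful route than the paper's. The paper never introduces types: it collapses everything to a \emph{single-type} process in which every non-root individual draws its ramification from the mixture distribution $\tilde P(k_1,\dots,k_n)$ of equation \eqref{eq:generating_tilde_G}, derives the scalar recursion $\bP_N=\tilde Z(\bP_{N-1})$, and settles both regimes by elementary real analysis of $\tilde Z$ on $[0,1]$: convexity gives at most two fixed points (Proposition \ref{prop:Fixed_Point}), $\tilde Z'(1)=R_0<1$ makes $\tilde Z$ a contraction so the only fixed point is $1$ (Theorem \ref{thm:R_0<1}), and for $R_0>1$ the intermediate value theorem applied to $x-\tilde Z(x)$ produces the interior fixed point $\bP_\infty\in(0,1)$ (Theorem \ref{thm:R_0>1}). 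Your subcritical first-moment/Markov argument and your supercritical appeal to the minimal-fixed-point theorem are both sound for the process you define.

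The genuine gap is the step you defer as ``bookkeeping'': identifying the Perron--Frobenius eigenvalue $\rho(M)$ of your next-generation matrix with the paper's $R_0$. That identification fails. The paper's $R_0=\tilde Z'(1)$ is, in your notation, the average of the row sums of $M$ weighted by $\pi_l=\bE(k_l)/\sum_j\bE(k_j)$ (one can check this reproduces Theorem \ref{thm:Main_2} exactly), and a weighted row-sum average is not the spectral radius unless $\pi$ happens to be a left eigenvector, which it is not here. The two quantities can even sit on opposite sides of $1$: take $n=2$, $Q=0$, tree $\cT_1$ with constant degree $N_1=2$, tree $\cT_2$ Poisson with $N_2=2$, and $T_1=1$, $T_2=0$. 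Then
\begin{equation*}
M=\begin{pmatrix} T_1(N_1-1) & T_2N_2\\ T_1N_1 & T_2N_2\end{pmatrix}
=\begin{pmatrix} 1 & 0\\ 2 & 0\end{pmatrix},\qquad \rho(M)=1,
\end{equation*}
so your multi-type process is critical and dies out almost surely, while the paper's formula gives $R_0=N_1-\tfrac{N_1}{N_1+N_2}=1.5>1$. So as written your argument proves the dichotomy for the threshold $\rho(M)$, not for the $R_0$ appearing in the statement, and the theorem as stated does not follow without either proving the (false) identity or replacing $R_0$ by $\rho(M)$ throughout. To prove the statement the paper actually makes, you would need to work with its single-type collapsed process, for which $R_0=\tilde Z'(1)$ is by construction the offspring mean and the scalar fixed-point argument applies directly.
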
 

In section \ref{sec:Generating_Function} we shall develop the framework of multivariate generating functions on which this work will be based. Section \ref{sec:R_0} will show how to use this framework to effectively compute $R_0$ and finally we will prove the main abstract results in section \ref{sec:Results}. We will then exemplify the theory with a few examples. These are instructive in order to unravel an explicit formula for $R_0$ which does not depend on any ``abstract'' generating function. Such a formula is deduced in section \ref{sec:R_0_2} where we show the following result.

\begin{theorem}\label{thm:Main_2}
	For each $j \in \lbrace 1 , \ldots , n \rbrace$ let $\bE(k_j)$ denote the average degree of the tree $\cT_j$ and $\sigma(k_j)$ the standard deviation of the associated degree distribution. Suppose that a fraction $Q$ of all infected individuals is completely isolated and does not transmit the disease to anyone. Then, if the probability of transmitting the disease along an arm of the tree $\cT_j$ is $T_j \in [0,1]$, the basic reproduction number is
	\begin{align*}
	R_0  = (1-Q) \sum_{j=1}^n T_j \bE(k_j) \left( 1 -  \frac{1}{\sum_{l=1}^n \bE(k_l) } \right) + (1-Q) \sum_{j=1}^n T_j  \sigma(k_j) \frac{ \sigma(k_j) }{\sum_{l=1}^n \bE(k_l) },
	\end{align*}
	or
	\begin{align*}
	R_0  = (1-Q) \sum_{j=1}^n T_j \left( \bE(k_j) \left( 1 -  \frac{1}{\sum_{l=1}^n \bE(k_l) } \right) +  \frac{ \sigma(k_j)^2 }{\sum_{l=1}^n \bE(k_l) } \right).
	\end{align*}
\end{theorem}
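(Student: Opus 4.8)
The plan is to compute $R_0$ directly from the multivariate generating function framework of Section~\ref{sec:Generating_Function}, specialized to the natural assumption that an individual's numbers of contacts across the $n$ interaction types are mutually independent. Under this assumption the joint degree generating function of the root (patient zero) factorizes as $G_0(x_1,\dots,x_n)=\prod_{j=1}^n G_0^{(j)}(x_j)$, where $G_0^{(j)}(x)=\sum_{k\ge 0}p_k^{(j)}x^k$ encodes the degree distribution of the tree $\cT_j$, so that $(G_0^{(j)})'(1)=\bE(k_j)$ and $(G_0^{(j)})''(1)=\bE(k_j^2)-\bE(k_j)$. Writing $S:=\sum_{l=1}^n\bE(k_l)$, the key combinatorial fact is that a uniformly chosen edge has type $i$ with probability proportional to the expected number of type-$i$ edges per vertex, namely $\bE(k_i)/S$.

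First I would build the generating function governing the first generation of the branching process, i.e.\ the onward contacts of an individual reached by traversing an already-infected edge. Conditioned on arriving along a type-$i$ edge, the relevant generating function is size-biased (excess degree) in the $i$-th variable and unbiased in the others, namely
\[
G_1^{(i)}(x_i)\prod_{j\ne i}G_0^{(j)}(x_j), \qquad G_1^{(i)}(x):=\frac{(G_0^{(i)})'(x)}{\bE(k_i)}.
\]
Summing over the arriving edge type with the weights $\bE(k_i)/S$ produces the combined edge-following generating function
\[
\cG_1(x_1,\dots,x_n)=\sum_{i=1}^n\frac{\bE(k_i)}{S}\,G_1^{(i)}(x_i)\prod_{j\ne i}G_0^{(j)}(x_j).
\]
Then I would extract $R_0$ as the expected number of secondary infections: each onward type-$j$ contact transmits with probability $T_j$, and a fraction $Q$ of infected individuals is isolated and transmits to nobody, contributing an overall factor $(1-Q)$, so that $R_0=(1-Q)\sum_{j=1}^n T_j\,\partial_{x_j}\cG_1(1,\dots,1)$.

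The computation of the derivative splits into a diagonal and an off-diagonal contribution. Using $G_0^{(l)}(1)=G_1^{(l)}(1)=1$, the $i=j$ term of $\cG_1$ contributes $(G_1^{(j)})'(1)=(G_0^{(j)})''(1)/\bE(k_j)=(\bE(k_j^2)-\bE(k_j))/\bE(k_j)$, while each $i\ne j$ term contributes $\bE(k_j)$; collecting these against the weights $\bE(k_i)/S$ yields
\[
\partial_{x_j}\cG_1(1,\dots,1)=\frac{\bE(k_j^2)-\bE(k_j)}{S}+\frac{S-\bE(k_j)}{S}\,\bE(k_j).
\]
Finally I would substitute the identity $\bE(k_j^2)=\sigma(k_j)^2+\bE(k_j)^2$ and simplify, which collapses the bracket to $\bE(k_j)\bigl(1-\tfrac1S\bigr)+\sigma(k_j)^2/S$ and produces both displayed expressions for $R_0$.

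The only genuinely delicate point is the bookkeeping in the construction of $\cG_1$: one must weight the arriving edge by the correct type-frequency $\bE(k_i)/S$ and carefully distinguish the size-biased excess degree along the single edge traversed from the ordinary degree along the remaining $n-1$ tree types. This is exactly what generates the $1/S$ factors and the coupling between the different trees; everything after it is a routine differentiation followed by the standard variance substitution. I would also make explicit (or flag as a standing hypothesis inherited from Section~\ref{sec:Generating_Function}) the independence of the degree distributions, since it is what makes $G_0$ factorize and the mixed partials reduce cleanly to the products of single-variable evaluations above.
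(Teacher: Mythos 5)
Your proposal is correct and follows essentially the same route as the paper: your $\cG_1$ is exactly the edge-following generating function $\tilde{G}$ of equation \eqref{eq:tilde_G_2} (the mixture over arrival-edge types with weights $\bE(k_i)/\sum_l\bE(k_l)$ and size-biasing only in the traversed coordinate), your formula $R_0=(1-Q)\sum_j T_j\,\partial_{x_j}\cG_1(1,\dots,1)$ is Proposition \ref{prop:R_0}, and the diagonal/off-diagonal split of the derivative followed by the substitution $G_j''(1)=\bE(k_j^2)-\bE(k_j)$ reproduces the paper's computation verbatim. The independence assumption you flag is indeed the paper's standing hypothesis, built into $G(x_1,\dots,x_n)=\prod_i G_i(x_i)$ from Section \ref{sec:Generating_Function}.
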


A somewhat interesting feature of the previous formula is that it together with the transmission probabilities $\lbrace T_i \rbrace_{i=1}^n$, it depends solely on the first two moments of the degree distribution of the trees encoding the interactions.

As a final application of the theory, and motivated by the recent outbreak of Covid-19, we reserve the last section to do some specific country analysis. We have attempted to make the parameters of the theory be somewhat adequate to model the initial spread of Covid-19 but the results should be regarded as an ``academic'' toy example. A more robust analysis using our framework is possible, but would require detailed knowledge on the habits, family ties/interactions, attendace of public gatherings and other features of the analyzed populations, which are not uniform in each country. That last section computes the relevant values of $R_0$ for the countries considered and the probabilities that the outbreak will be contained. As we shall see, these are very small and in order to increase it we will investigate the effect of quarantining part of the infected individuals.

\subsection*{Acknowledgments} 
Gon\c{c}alo Oliveira is supported by Funda\c{c}\~ao Serrapilheira 1812-27395, by CNPq grants 428959/2018-0 and 307475/2018-2, and by FAPERJ through the grant Jovem Cientista do Nosso Estado E-26/202.793/2019.

\section{Generating functions}\label{sec:Generating_Function}

\subsection{From the generating vertex}

Let $\cT$ be a graph having a tree structure and whose edges are divided into $n$ groups. Each of groups gives rise to subgraphs $\cT_i \subset \cT$ having the same vertices and the edges of the corresponding group. We will also assume that $\cT_i \subset \cT$ has a tree structure. For each of these trees $\cT_i$, for $i=1,\ldots , n$, we shall denote by $\lbrace P_i(k) \rbrace_{k \in \mathbb{N}_0}$ the corresponding degree distribution, i.e. for a randomly chosen vertex its degree is $k$ with probability $P_i(k)$. Using these we can construct the corresponding generating functions
$$G_i(x) = \sum_{k=0}^{+\infty} P_i(k) x^k,$$
and the joint generating function 
$$G(x_1 , \ldots , x_n) = \sum_{k_1=0}^{+\infty} \ldots \sum_{k_n=0}^{+\infty} \prod_{i=1}^{n} P_i (k_i) x_i^{k_i} = \prod_{i=1}^{n} G_i (x_i). $$

\begin{remark}
	One can readily check that $G_i(0)=0$, $G_i(1)=1$ and $G_i(x)$ converges for $|x|\leq 1$. Furthermore, we compute that $G_i'(x)=\sum_{k=1}^{+\infty} kP_i(k) x^{k-1}$ and thus the average degree $\bE(k_i)$ of the tree $\cT_i$ is
	$$\bE(k_i)= G_i'(1).$$
	Similarly, we find that from $G(x,\ldots ,x)$ the total average degree is
	$$\bE(k)= \sum_{i=1}^n G_i'(1) = \frac{d}{dx} G(x , \ldots , x) \Big\vert_{x=1}.$$
\end{remark}
Each of these trees corresponds to different classes of contacts which have unequal probability of transmitting the disease. For instance, people that live in the same house are more likely to transmit the disease to each other than those which occasionally meet on public transport. Thus, to each tree, i.e., to each class of interactions, we associate a probability of transmission $T_i$ and assume with no loss of generality that $T_1 > \ldots > T_n$. Fix $m_1, \ldots , m_n$, then the probability that a randomly picked first infected individual transmits the disease to $m_i$ other individuals along the tree $\cT_i$ is
$$\sum_{k_1=m_1}^{+\infty} \ldots \sum_{k_n=m_n}^{+\infty} \prod_{i=1}^{n} P_i (k_i)  {k_i \choose m_i} T_i^{m_i}(1-T_i)^{k_i-m_i}. $$
Further suppose there is a probability $Q$ that an infected individual is detected and quarantined in complete isolation. Then, if $m_1+\ldots+m_n \neq 0$, the probability above must be multiplied by a factor of $1-Q$ which accounts for the possibility that it is not quarantined. At this point, it is convenient to define the generating function
\begin{equation}\label{eq:generating_Z}
\begin{aligned}
Z(x) & = Q + (1-Q) \sum_{k_1=0}^{+\infty} \ldots \sum_{k_n=0}^{+\infty} \prod_{i=1}^{n} P_i (k_i) \sum_{m_1=0}^{k_1} \ldots \sum_{m_n=0}^{k_n}  {k_i \choose m_i} (xT_i)^{m_i}(1-T_i)^{k_i-m_i} \\
& =  Q + (1-Q) \sum_{k_1=0}^{+\infty} \ldots \sum_{k_n=0}^{+\infty} \left( \prod_{i=1}^{n} P_i (k_i)  \right) \left( \prod_{i=1}^{n} (xT_i + 1-T_i)^{k_i} \right) \\
& =  Q + (1-Q) \sum_{k_1=0}^{+\infty} \ldots \sum_{k_n=0}^{+\infty}  \prod_{i=1}^{n} P_i (k_i)   (xT_i + 1-T_i)^{k_i} \\
& =  Q + (1-Q) G( 1-T_1(1-x) , \ldots , 1-T_n(1-x) ) .
\end{aligned}
\end{equation}

\subsection{By following a random infection}

Suppose we place ourselves at a vertex which is obtained from following a randomly chosen transmission. As an element of the tree $\cT_i$, the ramification of this vertex is the number of remaining edges emanating from it. The probability that such a vertex has ramification $k_1 , \ldots , k_n$ is therefore proportional to 
$$(k_1+1)P_1(k_1+1) \prod_{i \neq 1}P_i(k_i)  + (k_2+1)P_2(k_2+1)\prod_{i \neq 2}P_i(k_i)  + \ldots + (k_n+1)P_n(k_n+1) \prod_{i \neq n}P_i(k_i) , $$
with each term accounting from the probability of arriving at the chosen vertex via a given tree. Normalizing this we find that such probability $\tilde{P}(k_1 , \ldots , k_n)$ is obtained from the previous formula by dividing by 
$$\bE(k)=\bE(k_1) + \ldots + \bE(k_n)=G_1'(1) + \ldots + G_n'(1),$$ 
i.e. the average total ramification. We have thus concluded that
$$\tilde{P}(k_1 , \ldots , k_n) = \frac{ \sum_{l=1}^n(k_l+1)P_l(k_l+1) \prod_{i \neq l}P_i(k_i)  }{\bE(k)}.$$
Associated with this we define the generating function
\begin{equation}\label{eq:generating_tilde_G}
\tilde{G}(x_1, \ldots , x_n) : =  \sum_{k_1 , \ldots , k_n} \tilde{P}(k_1 , \ldots , k_n) x_1^{k_1} \ldots x_n^{k_n}.
\end{equation}
For future reference, it is convenient to have this written in terms of the simpler generating functions $G_i(x_i)$ for $i=1, \ldots , n$. For this, we insert the formula for $\tilde{P}(k_1 , \ldots , k_n)$ previously obtained. This yields
\begin{align*}
\tilde{G}(x_1, \ldots , x_n) & =  \sum_{k_1 , \ldots , k_n} \tilde{P}(k_1 , \ldots , k_n) x_1^{k_1} \ldots x_n^{k_n} \\
& = \frac{1}{\bE(k)}   \sum_{k_1 , \ldots , k_n} \left( \sum_{l=1}^n(k_l+1)P_l(k_l+1) \prod_{i \neq l}P_i(k_i) \right)  x_1^{k_1} \ldots x_n^{k_n} \\
& = \frac{1}{\bE(k)}   \sum_{k_1 , \ldots , k_n} \left( \sum_{l=1}^n(k_l+1)P_l(k_l+1) x_l^{k_l} \prod_{i \neq l}P_i(k_i) x_i^{k_i} \right)  \\
& = \frac{1}{\bE(k)} \left( \sum_{l=1}^n G_l'(x_l) \prod_{i \neq l}G_i(x_i) \right)  ,
\end{align*}
which may be written in the following more explicit form
\begin{equation}\label{eq:tilde_G_2}
\tilde{G}(x_1, \ldots , x_n) = \frac{\sum_{l=1}^n G_l'(x_l) \prod_{i \neq l}G_i(x_i)}{\sum_{l=1}^n G_l'(1) }. 
\end{equation}

\begin{remark}
	Consider of a sole tree $\cT_i$, the corresponding ramification distribution is
	$$\tilde{P}_i(k)=\frac{(k+1)P_i(k+1)}{\bE(k_i)} = \frac{(k+1)P_i(k+1)}{G_i'(1)}.$$ 
	which the individual generating function
	$$\tilde{G}_i(x)=\sum_{k=0}^{+\infty} \tilde{P}_i(k) x^k =  \frac{\sum_{k=0}^{+\infty} (k+1)P_i(k+1) x^k}{G_i'(1)}= \frac{G_i'(x)}{G_i'(1)}.$$
	Then, we have $G_i'(x)=G_i'(1) \tilde{G}_i(x)$ which upon inserting in equation \ref{eq:tilde_G_2} yields
	$$\tilde{G}(x_1, \ldots , x_n) = \frac{\sum_{l=1}^n \tilde{G}_l(x_l) G_l'(1) \prod_{i \neq l}G_i(x_i)}{\sum_{l=1}^n G_l'(1) }.$$
\end{remark}

Let $m_1, \ldots , m_n \in \mathbb{N}_0$ with $m_1 + \ldots + m_n >0$. Using the distribution for the ramification, we conclude that by following the contacts of the trees $\cT_1$ up to $\cT_n$ a randomly infected individual infects $m_1$ up to $m_n$ other ones is
$$ (1-Q) \sum_{k_1 , \ldots , k_n} \tilde{P}(k_1 , \ldots , k_n) \prod_{i=1}^n {k_i \choose m_i} T_i^{m_i} (1-T_i)^{k_i-m_i} , $$
if any of the $m_i$ is nonzero. Based on this, we define the generating function
\begin{equation}\label{eq:generating_tilde_Z}
\begin{aligned}
\tilde{Z}(x) & = Q + (1-Q)  \sum_{k_1, \ldots , k_n} \tilde{P}(k_1 , \ldots , k_n)  \prod_{i=1}^n  \sum_{m_i=0}^{k_i}  {k_i \choose m_i} (xT_i)^{m_i} (1-T_i)^{k_i-m_i} \\
& =  Q + (1-Q) \sum_{k_1, \ldots , k_n} \tilde{P}(k_1 , \ldots , k_n) \prod_{i=1}^n (xT_i+1-T_i)^{k_i} \\
& = Q + (1-Q)\tilde{G}(1+T_1(x-1) , \ldots , 1+ T_n(x-1)).
\end{aligned}
\end{equation}

\section{The basic reproduction number}\label{sec:R_0}

The basic reproduction number, usually denoted by $R_0$, is defined as the average number of individuals which are infected by each previously infected one. In our setting this can be immediately computed as follows. First, suppose we stand at a randomly chosen individual which has ramification $(k_1, \ldots, k_n)$. The average number of individuals it infects is
\begin{align*}
a_{k_1 \ldots k_n} & = (1-Q) \sum_{m_1=0}^{k_1} \ldots  \sum_{m_n=0}^{k_n} (m_1 + \ldots + m_n) \prod_{i=1}^n {k_i \choose m_i} (T_i)^{m_i} (1-T_i)^{k_i-m_i} \\
& = (1-Q) \sum_{l=1}^n \sum_{m_1=0}^{k_1} \ldots  \sum_{m_n=0}^{k_n} m_l \prod_{i=1}^n {k_i \choose m_i} (T_i)^{m_i} (1-T_i)^{k_i-m_i} \\
& = (1-Q) \sum_{i=1}^n \sum_{m_i=1}^{k_i}  m_i  {k_i \choose m_i} (T_i)^{m_i} (1-T_i)^{k_i-m_i},
\end{align*}
where we have used the binomial formula to deduce $\sum_{m_i=0}^n {k_i \choose m_i} (T_i)^{m_i} (1-T_i)^{k_i-m_i} =1 $. For the average vertex we must weight this with the ramification distribution, i.e.
\begin{align*}
R_0:= \sum_{k_1 \ldots k_n} \tilde{P}(k_1 , \ldots , k_n) a_{k_1 \ldots k_n}
\end{align*}
Having in mind the formula for ${k_i \choose m_i}$ we find
$$m_i  {k_i \choose m_i} = \frac{k_i!}{(k_i-m_i)!(m_i-1)!} = k_i  \frac{(k_i-1)!}{((k_i-1)-(m_i-1))!(m_i-1)!} = k_i {k_i-1 \choose m_i-1} .$$
Inserting into the above equation for $R_0$ and using again the binomial formula gives
\begin{align*}
R_0 & = (1-Q) \sum_{k_1 \ldots k_n} \tilde{P}(k_1 , \ldots , k_n)  \sum_{i=1}^n \sum_{m_i=1}^{k_i} k_i {k_i-1 \choose m_i-1}  (T_i)^{m_i} (1-T_i)^{k_i-m_i} \\
& = (1-Q) \sum_{k_1 \ldots k_n} \tilde{P}(k_1 , \ldots , k_n)  \sum_{i=1}^n k_i T_i \sum_{m_i=1}^{k_i}  {k_i-1 \choose m_i-1}  (T_i)^{m_i-1} (1-T_i)^{(k_i-1)-(m_i-1)} \\
& = (1-Q) \sum_{k_1 \ldots k_n} \tilde{P}(k_1 , \ldots , k_n)  \sum_{i=1}^n k_i T_i .
\end{align*}
Comparing this with the formula for $\tilde{Z}(x)$ we conclude the following result.

\begin{proposition}\label{prop:R_0}
	The basic reproduction number can be obtained from the generating function $\tilde{Z}(x)$ for the distribution of individuals infected by following a randomly chosen infected individual, via
	$$R_0=\tilde{Z}'(1),$$
	which may also be written as
	\begin{align*}
	R_0 & = (1-Q) \sum_{j=1}^n T_j \frac{\partial \tilde{G}}{\partial x_j} \Big\vert_{x_1= \ldots = x_n=1}.
	\end{align*}
\end{proposition}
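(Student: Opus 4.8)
The plan is to observe that the hard analytic work is already done: the excerpt has reduced $R_0$ to the closed form
$$R_0 = (1-Q) \sum_{k_1 \ldots k_n} \tilde{P}(k_1 , \ldots , k_n) \sum_{i=1}^n k_i T_i,$$
so all that remains is to recognize the right-hand side as a derivative of the generating function $\tilde{Z}$. I would carry this out in two short steps: first differentiate $\tilde{Z}$ through its expression in terms of $\tilde{G}$ using the chain rule, and then differentiate the power-series definition of $\tilde{G}$ term by term to identify its first partial moments.

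For the first step, recall from \eqref{eq:generating_tilde_Z} that $\tilde{Z}(x) = Q + (1-Q)\tilde{G}(1+T_1(x-1) , \ldots , 1+ T_n(x-1))$. Applying the chain rule to the affine inner substitutions $x_j = 1 + T_j(x-1)$, whose derivatives are $dx_j/dx = T_j$, gives
$$\tilde{Z}'(x) = (1-Q) \sum_{j=1}^n T_j \frac{\partial \tilde{G}}{\partial x_j}\Big\vert_{(1+T_1(x-1), \ldots , 1+T_n(x-1))}.$$
Evaluating at $x=1$ collapses every argument to $1+T_j\cdot 0 = 1$, which yields at once the second displayed formula of the proposition and shows that the two asserted forms of $R_0$ are one and the same.

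For the second step, I would differentiate the series \eqref{eq:generating_tilde_G} defining $\tilde{G}$. Term by term,
$$\frac{\partial \tilde{G}}{\partial x_j} = \sum_{k_1 , \ldots , k_n} \tilde{P}(k_1 , \ldots , k_n)\, k_j\, x_j^{k_j-1} \prod_{i \neq j} x_i^{k_i},$$
so that setting all $x_i = 1$ gives $\partial_{x_j}\tilde{G}\vert_{x=1} = \sum_{k_1 \ldots k_n} \tilde{P}(k_1 , \ldots , k_n)\, k_j$, the expected value of the $j$-th ramification. Substituting this into the chain-rule expression and interchanging the finite sum over $j$ with the sum over the $k_i$ reproduces exactly the closed form of $R_0$ quoted above, completing the identification $R_0 = \tilde{Z}'(1)$.

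The only point requiring genuine care is the term-by-term differentiation of $\tilde{G}$ and its evaluation at the boundary point $x=1$ of the region of convergence, rather than at an interior point. This is legitimate precisely because the relevant first moments are finite: since each $\bE(k_j) = G_j'(1) < \infty$, the series for $\partial_{x_j}\tilde{G}$ converges at $x=1$, and Abel's theorem (or monotone convergence applied to the nonnegative coefficients) justifies differentiating under the summation and passing to the limit $x \to 1^-$. I expect this interchange to be the main, though mild, obstacle; everything else is the chain rule and a reindexing of summations.
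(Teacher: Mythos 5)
Your proposal is correct and follows essentially the same route as the paper: the paper derives the closed form $R_0=(1-Q)\sum_{k_1\ldots k_n}\tilde{P}(k_1,\ldots,k_n)\sum_i k_iT_i$ and then simply says ``comparing this with the formula for $\tilde{Z}(x)$ we conclude the result,'' which is exactly the chain-rule-plus-first-moment identification you carry out explicitly. Your added remark on justifying term-by-term differentiation at $x=1$ via Abel's theorem is a detail the paper omits, but it does not change the argument.
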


\section{Containing an outbreak}\label{sec:Results}

In this section we shall compute the probability that the infection as it propagates eventually dies out. In the case of a unique tree, i.e. assuming all interactions have the same probability of transmitting the disease, such a computation have been carried out in \cite{Brauer}. See also \cites{1,2,3,4} for related results and \cite{Callaway} for the same setup in the context of the theory of percolation. 

Consider an individual which has been infected by another one, i.e. a vertex of the tree $\cT$ which is not its root and consider the probability that the infections generated by that vertex disappear within $N$ generations. Denote such a probability by $p_N$, then

\begin{align*}
\bP_N & = Q + (1-Q) \sum_{m_1 , \ldots , m_n} \bP_{N-1}^{m_1 + \ldots + m_n} \left( \sum_{k_1  \geq m_1 , \ldots , k_n \geq m_n}  \tilde{P}(k_1 , \ldots , k_n) \prod_{i=1}^n {k_i \choose m_i} T_i^{m_i} (1-T_i)^{k_i-m_i} \right) \\
& =  Q + (1-Q) \sum_{m_1 , \ldots , m_n}  \sum_{k_1  \geq m_1}  \ldots \sum_{k_n \geq m_n} \tilde{P}(k_1 , \ldots , k_n) \prod_{i=1}^n {k_i \choose m_i} \bP_{N-1}^{m_i}T_i^{m_i} (1-T_i)^{k_i-m_i} \\
& =  Q + (1-Q) \sum_{k_1 , \ldots , k_n}  \sum_{m_1 = 0}^{k_1}  \ldots \sum_{m_1 = 0}^{k_1}  \tilde{P}(k_1 , \ldots , k_n) \prod_{i=1}^n {k_i \choose m_i} (\bP_{N-1}T_i)^{m_i} (1-T_i)^{k_i-m_i} ,
\end{align*}
which upon comparing with the definition of $\tilde{Z}(x)$ in equation \ref{eq:generating_tilde_Z} can equally be read as
\begin{equation}\label{eq:p_N_from_P_N-1}
\bP_N = \tilde{Z}(\bP_{N-1}) .
\end{equation}
By construction we must have
$$\bP_0=Q + (1-Q) \sum_{k_1  \geq 0 , \ldots , k_n \geq 0}  \tilde{P}(k_1 , \ldots , k_n) \prod_{i=1}^n (1-T_i)^{k_i},$$
with equation \ref{eq:p_N_from_P_N-1} yielding all the following iterations. From inspection we find that $\bP_1:=\tilde{Z}(\bP_0)>\bP_0$ and as $\tilde{Z}$ is increasing we find assuming $ \bP_{N-1} > \bP_{N-2}$ that
$$\bP_{N}-\bP_{N-1}=\tilde{Z}(\bP_{N-1})-\tilde{G}(\bP_{N-2}) >0,$$
which inductively proves that the sequence $\lbrace \bP_N \rbrace_{N \in \mathbb{N}_0}$ is increasing. Hence, the number 
$$\bP_{\infty}= \lim_{N \to + \infty} \bP_N \in [0,1],$$
is well defined and encodes the probability that the infection starting from any such individual eventually dies out. We can then conclude the following.

\begin{proposition}\label{prop:Fixed_Point}
	The probability $\bP_{\infty}$ that the chain of infections generated from a randomly infected individual eventually disappears in a finite number of generations satisfies
	$$\bP_\infty = \tilde{Z}(\bP_\infty).$$
	Furthermore, there ate most two fixed points of $\tilde{Z}$ in $[0,1]$.
\end{proposition}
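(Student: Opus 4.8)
The plan is to establish the two assertions separately, in both cases exploiting only that $\tilde Z$ is the restriction to $[0,1]$ of a power series with nonnegative coefficients. For the fixed-point identity, first I would record that the sequence $\{\bP_N\}_{N\in\mathbb{N}_0}$ was shown just above the statement to be nondecreasing, and it is bounded above by $1$; hence it converges to some $\bP_\infty\in[0,1]$. Since $\tilde Z$ is a convergent power series on $[0,1]$ it is in particular continuous there, so passing to the limit in the recursion $\bP_N=\tilde Z(\bP_{N-1})$ of equation \ref{eq:p_N_from_P_N-1} gives $\bP_\infty=\lim_{N\to\infty}\tilde Z(\bP_{N-1})=\tilde Z(\bP_\infty)$. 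This is the entire content of the first claim and is the routine half.

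For the bound on the number of fixed points, the key structural fact I would isolate is that $\tilde Z$ is convex on $[0,1]$. From the product form in equation \ref{eq:generating_tilde_Z}, $\tilde Z(x)=Q+(1-Q)\sum_{k_1,\ldots,k_n}\tilde P(k_1,\ldots,k_n)\prod_{i=1}^n(1-T_i+T_ix)^{k_i}$, each factor $(1-T_i+T_ix)^{k_i}$ is a polynomial in $x$ with nonnegative coefficients because $T_i\in[0,1]$, while $\tilde P\ge 0$. Hence $\tilde Z(x)=\sum_{j\ge 0}c_j x^j$ with every $c_j\ge 0$, so that $\tilde Z''(x)=\sum_{j\ge 2}j(j-1)c_j x^{j-2}\ge 0$ on $[0,1]$; that is, $\tilde Z$, and therefore $g(x):=\tilde Z(x)-x$, is convex.

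It then remains to use the elementary lemma that a convex function $g\colon[0,1]\to\R$ has at most two zeros unless it vanishes on an entire subinterval. I would prove this by contradiction: if $g$ had three zeros $x_1<x_2<x_3$, then for $y\in(x_1,x_2)$ convexity applied to the triple $(x_1,y,x_2)$ gives $g(y)\le 0$, while convexity applied to $(y,x_2,x_3)$ gives $g(x_2)\le\frac{x_3-x_2}{x_3-y}g(y)$, and since $g(x_2)=0$ this forces $g(y)\ge 0$; thus $g\equiv 0$ on $[x_1,x_2]$, and symmetrically on $[x_2,x_3]$. Applied to $g=\tilde Z-x$ this yields at most two fixed points of $\tilde Z$ in $[0,1]$, the sole exception being the degenerate case where $g$ vanishes on a subinterval. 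That case I would rule out by noting that a power series vanishing on an interval has all coefficients zero, so $\tilde Z(x)=x$ identically, which is impossible since $\tilde Z(0)=Q+(1-Q)\tilde G(1-T_1,\ldots,1-T_n)>0$ in any non-trivial situation.

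The convexity step is immediate once the nonnegativity of the coefficients is recorded, so I expect the main obstacle to be not an estimate but the careful statement and proof of the ``at most two zeros'' lemma together with the clean exclusion of the degenerate affine case $\tilde Z=\mathrm{id}$; everything else is bookkeeping around the recursion and the power-series expansion.
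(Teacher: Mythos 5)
Your proof is correct, and the first half (monotone bounded sequence, continuity of the power series, pass to the limit in $\bP_N=\tilde{Z}(\bP_{N-1})$) is exactly the paper's argument. For the second half both you and the paper rest on the same structural fact --- that $\tilde{Z}$ has nonnegative Taylor coefficients, so $x\mapsto \tilde{Z}(x)-x$ is convex --- but you execute the ``at most two zeros'' step differently. The paper assumes \emph{strict} positivity of $\tilde{Z}''$ on $(0,1)$, takes three putative fixed points, produces two critical points of $f(x)=x-\tilde{Z}(x)$ by Rolle's theorem, classifies both as maxima via $f''<0$, and derives a contradiction from the minimum that must sit between them. You instead prove the lemma for merely convex $g$ via the three-point convexity inequality and then dispose of the degenerate case $\tilde{Z}=\mathrm{id}$ by the identity theorem for power series together with $\tilde{Z}(0)>0$. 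Your route buys two things: it avoids any appeal to differentiability beyond what convexity gives, and it covers the boundary case where $\tilde{Z}$ is affine (all degrees concentrated low enough that no coefficient $c_j$ with $j\ge 2$ survives), which the paper's strict-inequality argument silently excludes. As a side remark, the paper's proof contains a sign slip ($f(0)>0$ where $f(0)=-\tilde{Z}(0)<0$), which is harmless to its structure since only the zeros at $\bP_*$, $\bP_{**}$ and $1$ are used; your write-up does not inherit this issue.
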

\begin{proof}
	The fact that $\bP_\infty$ satisfies $\bP_\infty = \tilde{Z}(\bP_\infty)$ follows immediately from the preceding discussion. Hence, the only remaining item to be shown is that there are at most two fixed points of $\tilde{Z}$ in the interval $[0,1]$. The fact that there is at least one is obvious as $\tilde{Z}(1)=1$. We must now show that there is at most one other.\\
	We argue by contradiction and assume there are at least two other different fixed points $\bP_*< \bP_{**}$ both in $(0,1)$.\footnote{We assume with no loss of generality that $\bP_*,\bP_{**}$ are positive as $\tilde{Z}(0)= \bP_0 \neq 0$ and so $0$ can never be a fixed point.} First, notice that both $\tilde{Z}'$, $\tilde{Z}''$ are positive in $(0,1)$. Secondly, consider the function $f(x)=x-\tilde{Z}(x)$ whose zeros correspond to the fixed points of $\tilde{Z}$, we have
	$$f(0)>0  , \ \ f(P_*)=0 , \ \ f(P_**)=0 , \ \  f(1)=0.$$
	Hence, by the intermediate value theorem there must be two critical points $c_* \in (\bP_*, \bP_{**})$ and  $c_{**} \in (\bP_{**},1)$ of $f$. As in $(0,1)$ we have $f''(x)=- \tilde{Z}''(x) <0$ we have that each of these must a maximum. Again, by the intermediate value theorem, between the two maxima must be a minimum $c_{***} \in (c_* , c_{**})$ contradicting $f''(c_{***})<0$.
\end{proof}

Placing ourselves at the tip of the tree which originated the infection chain, the so called patient zero, the probability that the infection eventually dies out is
\begin{align*}
\bP & = Q + (1-Q) \sum_{k_1=0}^{+\infty} \ldots \sum_{k_n=0}^{+\infty} \sum_{m_1=0}^{k_1} \ldots \sum_{m_n=0}^{k_n} \left( \prod_{i=1}^{n} P_i (k_i)  {k_i \choose m_i} T_i^{m_i}(1-T_i)^{k_i-m_i} \right) p_{\infty}^{m_1 + \ldots + m_n} \\
& = Q + (1-Q) \sum_{k_1=0}^{+\infty} \ldots \sum_{k_n=0}^{+\infty} \sum_{m_1=0}^{k_1} \ldots \sum_{m_n=0}^{k_n}  \prod_{i=1}^{n} P_i (k_i)  {k_i \choose m_i} (p_\infty T_i)^{m_i}(1-T_i)^{k_i-m_i}  \\
& = Z(\bP_\infty),
\end{align*}
where the last equality follows from comparison with the formula \ref{eq:generating_Z} for the generating function for infections starting from the patient zero.

\subsection{The case when $R_0<1$}

We shall now prove that when the basic reproduction number is smaller than one, the chain of infections will almost surely extinguish.

\begin{theorem}\label{thm:R_0<1}
	If $R_0<1$, then both $\bP$ and $\bP_{\infty}$ equal $1$. 
\end{theorem}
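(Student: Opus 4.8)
The plan is to study the auxiliary function $f(x) = \tilde{Z}(x) - x$ on $[0,1]$ and to show that, when $R_0 < 1$, it has no zero in $[0,1)$; since $\bP_\infty$ is a fixed point of $\tilde{Z}$ by Proposition \ref{prop:Fixed_Point}, this forces $\bP_\infty = 1$, after which $\bP = 1$ follows by direct substitution. First I would collect the three facts already at our disposal: that $\tilde{Z}(1) = 1$ (immediate from the definition in equation \ref{eq:generating_tilde_Z} together with $\tilde{G}(1,\ldots,1) = 1$), that $\tilde{Z}'(1) = R_0$ by Proposition \ref{prop:R_0}, and that $\tilde{Z}$ is increasing and convex on $(0,1)$, i.e. $\tilde{Z}' > 0$ and $\tilde{Z}'' \ge 0$, as was recorded in the proof of Proposition \ref{prop:Fixed_Point}.

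The central step is a convexity argument. Since $\tilde{Z}$ is convex, its derivative $\tilde{Z}'$ is nondecreasing, so for every $x \in [0,1)$ one has $\tilde{Z}'(x) \le \tilde{Z}'(1) = R_0 < 1$. Consequently $f'(x) = \tilde{Z}'(x) - 1 < 0$ throughout $[0,1)$, so $f$ is strictly decreasing on $[0,1]$. Combined with $f(1) = \tilde{Z}(1) - 1 = 0$, this yields $f(x) > 0$, that is $\tilde{Z}(x) > x$, for every $x \in [0,1)$, and hence $\tilde{Z}$ has no fixed point in $[0,1)$. As $\bP_\infty$ is the monotone limit of the sequence $\bP_N = \tilde{Z}(\bP_{N-1})$ and satisfies $\tilde{Z}(\bP_\infty) = \bP_\infty$, the only remaining possibility is $\bP_\infty = 1$. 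Finally I would substitute this into $\bP = Z(\bP_\infty)$: evaluating equation \ref{eq:generating_Z} at $x = 1$ gives $Z(1) = Q + (1-Q) G(1,\ldots,1) = Q + (1-Q) = 1$, since each $G_i(1) = 1$, so $\bP = 1$ as claimed.

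The main obstacle is purely the justification of the strict inequality $\tilde{Z}(x) > x$ on the whole half-open interval from only the endpoint data $\tilde{Z}(1) = 1$ and $\tilde{Z}'(1) = R_0 < 1$; convexity is exactly what upgrades this infinitesimal condition at $x = 1$ to a global one. I would make sure the needed regularity is in place, namely that $\tilde{Z}$ extends to a $C^1$ function at $x = 1$ with finite $\tilde{Z}'(1) = R_0$ (implicit in $R_0$ being well defined) and is twice differentiable with $\tilde{Z}'' \ge 0$ on $(0,1)$, both of which follow from $\tilde{Z}$ being a power series with nonnegative coefficients and radius of convergence at least $1$. No case distinction on whether $\tilde{Z}$ is affine is required, since the argument invokes only $\tilde{Z}' \le R_0 < 1$ on $[0,1)$, which holds equally when $\tilde{Z}'' \equiv 0$.
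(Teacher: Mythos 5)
Your proposal is correct and rests on exactly the same key fact as the paper's proof: convexity of $\tilde{Z}$ gives $\tilde{Z}'(x)\le \tilde{Z}'(1)=R_0<1$ on $[0,1]$, which rules out any fixed point other than $1$. The only cosmetic difference is that the paper packages this bound as a contraction-mapping argument ($|\tilde{Z}(x)-\tilde{Z}(y)|\le R_0|x-y|$, hence a unique fixed point), whereas you deduce that $f(x)=\tilde{Z}(x)-x$ is strictly decreasing with $f(1)=0$; both are immediate consequences of the same estimate.
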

\begin{proof}
	By proposition \ref{prop:Fixed_Point} we know that $\bP_{\infty}$ it must be a fixed point of $\tilde{Z}$. Furthermore, we know that $\tilde{Z}(1)=1$ and so the statement follows if we can show that there is no other fixed point of $\tilde{Z}$ in the interval $[0,1]$. In that direction we shall show that under the hypothesis that $R_0<1$ the map 
	$$\tilde{Z}:[0,1] \to [0,1]$$ 
	is a contraction and so has a unique fixed point which must therefore the $1$.  This follows immediately from realizing that $\tilde{Z}''$ is nonnegative and so $\tilde{Z}'(x) \leq \tilde{Z}'(1)=R_0$ by \ref{prop:R_0}. Hence, for $x,y \in [0,1]$ we have
	$$|\tilde{Z}(x) - \tilde{Z}(y)| \leq \left( \sup_{t \in [0,1]}\tilde{Z}'(t) \right) |x-y|  \leq R_0 |x-y|,$$
	which shows that $\tilde{Z}$ is a contraction if $R_0<1$.\\
	Finally, the fact that also $\bP=1$ is then a consequence of $\bP=Z(\bP_\infty)=Z(1)=1$.
\end{proof}

\begin{remark}[The minimum required quarantined]\label{rem:Min_Quaranteen}
	At the beginning of an outbreak the question arises of what is the minimum number of infected individuals that must be detected and subsequently quarantined in order to contain the possible epidemic outbreak.
	
	If the disease is already well known, such as flu, measles or any other standard disease, not Covid-19, then its ``free'' basic reproduction number $R_0^{free}$ is known. Of course, this may depend on local conditions of where the outbreak takes place. By ``free'' we intend to emphasize that this is the basic reproduction number when the disease is free to propagate without taking in account any non-pharmaceutical intervention directed to slow its spread.\\
	Now, suppose an aggressive testing capacity can be put in place in order to detect those which have been infected. We would like to know the minimal fraction $Q$ of infected individuals which must be completely isolated so that the outbreak is almost surely controlled without having to take any other measures. The answer, as we shall now see is that $Q>1-\frac{1}{R_0^{free}}$.\\
	The generating functions $\tilde{Z}^{free}$ and $\tilde{Z}$ can both be written as in equation \ref{eq:generating_tilde_Z} with the exception that $Q=0$ for $\tilde{Z}^{free}$. Hence, $\tilde{Z}=Q + (1-Q) \tilde{Z}^{free}$ and by Proposition \ref{prop:R_0} we find
	$$R_0=\tilde{Z}'(1)=(1-Q) (\tilde{Z}^{free})'(1)=(1-Q)R_0^{free}.$$
	Then, the condition $R_0<1$ required to apply Theorem \ref{thm:R_0<1} turns into
	$$Q>1-\frac{1}{R_0^{free}},$$
	which corroborates the common intuition behind the basic reproduction number. For example, suppose there is an outbreak of disease for which each infected individual is expected to infect $3$ others if nothing is done to prevent it, i.e. $R_0^{free}$, then we expect that in order to cut the chain of transmission less than a third of the infections can be allowed to transmit the disease. Indeed, from the above computation, at least $Q>2/3$ of the whole infected must be detected and isolated.
\end{remark}

\subsection{The case when $R_0 > 1$}

Finally, in the case when $R_0 > 1$ we shall now prove that there is still a positive, but not certain, probability that the infection disappears.

\begin{theorem}\label{thm:R_0>1}
	If $R_0>1$, then $\bP_\infty \in (0,1)$ and $\bP= Z(\bP_\infty) \in (0,1)$.
\end{theorem}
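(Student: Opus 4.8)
The plan is to show both strict inequalities by exploiting the convexity of $\tilde{Z}$ on $[0,1]$ together with the fixed-point structure already established in Proposition \ref{prop:Fixed_Point}. The starting observation is that $\tilde{Z}$ maps $[0,1]$ into itself, is increasing, strictly convex (since $\tilde{Z}''>0$ on $(0,1)$), and satisfies $\tilde{Z}(0)=\bP_0>0$ and $\tilde{Z}(1)=1$. The hypothesis $R_0>1$ enters through Proposition \ref{prop:R_0} as $\tilde{Z}'(1)=R_0>1$. The key consequence is that $1$ is a \emph{repelling} fixed point from the left: because $\tilde{Z}'(1)>1$, for $x$ slightly below $1$ we have $\tilde{Z}(x)<x$, i.e.\ the function $f(x)=x-\tilde{Z}(x)$ is strictly positive just to the left of $1$.

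First I would analyze $f(x)=x-\tilde{Z}(x)$ on $[0,1]$. We have $f(0)=-\bP_0<0$ and $f(1)=0$, while $f'(1)=1-R_0<0$ gives $f(x)>0$ for $x$ slightly less than $1$. By the intermediate value theorem applied on the interval where $f$ changes from negative (near $0$) to positive (near $1$), there exists a fixed point $\bP_*\in(0,1)$ with $f(\bP_*)=0$. Strict convexity of $\tilde{Z}$ makes $f$ strictly concave, so $f$ has at most two zeros in $[0,1]$; combined with Proposition \ref{prop:Fixed_Point} this pins down the fixed-point set as exactly $\{\bP_*,1\}$ with $\bP_*\in(0,1)$. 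The next step is to identify $\bP_\infty$ with $\bP_*$ rather than with $1$. Here I would use the monotone iteration $\bP_N=\tilde{Z}(\bP_{N-1})$ starting from $\bP_0\in(0,1)$: since $\tilde{Z}(x)>x$ exactly on $(\bP_*,1)$ and $\tilde{Z}(x)<x$ on $[0,\bP_*)$, the increasing sequence $\{\bP_N\}$ starting below $\bP_*$ (because $\tilde{Z}(0)=\bP_0$ and $\bP_0$ must lie below the smaller fixed point) converges upward to the least fixed point $\bP_*$, so $\bP_\infty=\bP_*\in(0,1)$.

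To conclude, once $\bP_\infty\in(0,1)$ is established, the claim $\bP=Z(\bP_\infty)\in(0,1)$ follows by noting that $Z$ is also increasing with $Z(0)=Q<1$ (when $Q<1$) and $Z(1)=1$, so $Z$ maps the open interval $(0,1)$ strictly into $(0,1)$; in particular $Z(\bP_\infty)<Z(1)=1$ and $Z(\bP_\infty)\geq Z(0)>0$, giving $\bP\in(0,1)$.

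The main obstacle I anticipate is verifying cleanly that the monotone sequence $\{\bP_N\}$ converges to the \emph{smaller} fixed point $\bP_*$ and not to $1$; this rests on showing $\bP_0<\bP_*$, i.e.\ that the initial value lands in the basin of the lower fixed point. The convexity argument should force this—since the two fixed points bracket the region where $\tilde{Z}(x)>x$, and $\bP_0=\tilde{Z}(0)$ is the image of $0$ under a convex map lying below the diagonal near $0$—but making the inequality $\bP_0<\bP_*$ rigorous (rather than merely $\bP_0\le\bP_\infty$) is the delicate point, since it is precisely what distinguishes the supercritical regime and prevents the trivial collapse $\bP_\infty=1$.
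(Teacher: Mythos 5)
Your argument follows the same core route as the paper: define $f(x)=x-\tilde{Z}(x)$, use $f(0)<0$, $f(1)=0$ and $f'(1)=1-R_0<0$ to get $f>0$ just left of $1$, and invoke the intermediate value theorem plus Proposition \ref{prop:Fixed_Point} to obtain exactly one interior fixed point $\bP_*\in(0,1)$. You then go a step further than the paper, which stops at ``we can also have $\bP_\infty=x_*$'': you try to actually identify $\bP_\infty$ with $\bP_*$ via the monotone iteration $\bP_N=\tilde{Z}(\bP_{N-1})$. That extra step is genuinely needed to conclude $\bP_\infty\in(0,1)$, so your instinct is right and your write-up is in this respect more complete than the paper's.

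Two corrections, though. First, you have the sign of $f$ backwards in the dynamical part: since $f(0)=-\tilde{Z}(0)<0$ and $f>0$ near $1$, one has $\tilde{Z}(x)>x$ on $[0,\bP_*)$ and $\tilde{Z}(x)<x$ on $(\bP_*,1)$ --- the opposite of what you wrote. With your stated signs the iteration would be decreasing on $[0,\bP_*)$, contradicting the monotonicity already established. Second, the ``delicate point'' you flag (proving $\bP_0<\bP_*$ strictly) is not actually needed and closes immediately: since $\tilde{Z}$ is increasing and $\bP_0=\tilde{Z}(0)\leq\tilde{Z}(\bP_*)=\bP_*$, an induction gives $\bP_N\leq\bP_*$ for all $N$, so the increasing sequence converges to a fixed point in $[0,\bP_*]$, which must be $\bP_*$ itself; strictness of $\bP_0<\bP_*$ plays no role. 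A small additional slip: $Z(0)$ is not $Q$ but $Q+(1-Q)G(1-T_1,\ldots,1-T_n)\geq Q$; this does not affect your conclusion that $Z(\bP_\infty)\in(0,1)$, which follows from $Z$ being increasing with nonnegative Taylor coefficients and $Z(1)=1$.
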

\begin{proof}
	In this setup we consider the function $f(x)=x-\tilde{Z}(x)$ used in the proof of Proposition \ref{prop:Fixed_Point}. This satisfies $f(0)=-\tilde{Z}(0)<0$, $f(1)=0$ and by Proposition \ref{prop:R_0} $f'(1)=1-R_0<0$ and so $f(1-\epsilon)>0$ for sufficiently small nonzero $\epsilon \ll 1$. Thus, again the intermediate value theorem shows the existence of a zero of $f$ which we shall denote by $x_* \in (0,1-\epsilon)$. Recalling that zeros of $f$ correspond to fixed points of $\tilde{Z}$ which by Proposition \ref{prop:Fixed_Point} has only $x_*$ and $1$ as fixed points. Thus, in this case we can also have $\bP_\infty=x_*$.
\end{proof}

\begin{remark}[Lower bounds for $\bP$ and $\bP_\infty$]\label{rem:Lower_Bound}
	From the fixed point equation $\bP_\infty = \tilde{Z}(\bP_\infty)$
	and writing the generating function $\tilde{Z}$ as in Remark \ref{rem:Min_Quaranteen}, i.e. $\tilde{Z}(x)=Q+(1-Q) \tilde{Z}^{free}(x)$, we find that
	$$\bP_\infty=Q+(1-Q)\tilde{Z}^{free}(\bP_{\infty}),$$
	and so $\bP_\infty -Q = (1-Q)\tilde{Z}^{free}(\bP_{\infty}) >0$ and so
	$$\bP_\infty > Q.$$
	Furthermore, we can equally write $\bP=Z(\bP_\infty)=Q + (1-Q) \tilde{Z}(\bP_\infty)$. Said in other words, we find that the probability of the infection eventually dying out is at least the fraction of infected individuals which are completely isolated.
\end{remark}

\subsection{The case when $R_0=1$}

We shall now consider the case when $R_0=1$. We go back to the setup in the proof of Proposition \ref{prop:Fixed_Point} and Theorem \ref{thm:R_0<1}, namely we consider the function $f(x)=x-\tilde{Z}(x)$ whose zeros correspond to the fixed points of the map $\tilde{Z}:[0,1] \to [0,1]$. We have seen that $f''(x)<0$ in $(0,1)$ and $f(1)=0$. Under the hypothesis that $R_0=1$ we have $f'(1)=1-R_0=0$ and as $f''(x)<0$ for $x<1$ we find that $f$ is negative immediately before $x=1$. Hence, if there was another zero $x_* \in (0,1)$ of $f$, between $(x_*,1)$ the function $f$ would have a minimum which contradicts $f''<0$ in $(0,1)$. We then conclude that also in this case
$$\bP=\bP_{\infty}=1.$$

\subsection{Lower bounds on $\bP$}\label{sec:Lower_Bound}

In this section we shall elaborate on the question raised in Remark \ref{rem:Lower_Bound}, namely: Whether it is possible to find lower bounds on the probability that the chain of infections eventually dies out not leading to an epidemic. 

To answer the question raised we proceed by direct inspection of the fixed point equation in Proposition \ref{prop:Fixed_Point}. Start by noticing that all the terms in the Taylor series for $\tilde{Z}$ are positive as one can check from its definition in equation \ref{eq:generating_tilde_Z}. Thus, as $\bP_\infty=\tilde{Z}(\bP_\infty)$ we find that $\bP_\infty$ is larger than the zeroth order term of $\tilde{Z}$, i.e.
$$\bP_\infty > \tilde{Z}(0),$$
and from the monotonicity of $Z(x)$ we then have
$$\bP> Z(\tilde{Z}(0)),$$
which is itself grater than $Z(0)$.

\section{Examples}

In the simplest nontrivial example we can consider a population in which individuals have to kinds of interactions: a close and continuous interaction with their family and friends, and a a more distant sporadic interaction with not so close friends and other people which cross their path, by chance, in their daily lives as they commute to work and so on. Of course, we expect the probability of transmitting the disease to be larger in the first case and so assign to a it a larger transmissibility $T_1$ than to the second interactions $T_2$, i.e. $T_1>T_2$.

\subsection{Delta and Poisson}

In this first example we assume for simplicity that all person have the same number, $N_1 \geq 1$, of close contacts and their sporadic contacts follow a Poisson distribution with intensity $N_2$. In formulas, we have
$$P_1(k_1)=\delta_{N_1 k_1} , \ \text{and} \ P_2(k_2)=\frac{N_2^{k_2}}{k_2!} e^{-N_2}. $$
Then, we find the generating function for the degree distribution
$$G_1(x_1)=x_1^{N_1} , \ \text{and} \ G_2(x_2)= e^{-N_2(1-x_2)}, $$
from which we compute $G(x_1,x_2)=x_1^{N_1}e^{-N_2(1-x_2)}$ and so
\begin{align*}
Z(x) & = Q + (1-Q) G(1-T_1(1-x), 1-T_2(1-x)) \\
& = Q + (1-Q) (1-T_1(1-x))^{N_1} e^{-N_2 T_2(1-x)}.
\end{align*}
In order to compute $\tilde{Z}(x)$ we may first find the generating function $\tilde{G}(x_1 , x_2)$ for the ramification distribution. This can be done using equation \ref{eq:tilde_G_2} which yields
\begin{align*}
\tilde{G}(x_1,x_2) & = \frac{G_1(x_1)G_2'(x_2) + G_1'(x_1)G_2(x_2)}{G_1'(1)+G_2'(1)} \\
& = \frac{x_1^N N_2 e^{-N_2(1-x_2)} + N_1 x_1^{N_1-1} e^{-N_2 (1-x_2)}}{N_1 + N_2} \\
& = \frac{ N_1 + N_2 x_1 }{N_1 + N_2} x_1^{N_1-1}e^{-N_2(1-x_2)}.
\end{align*}
We can finally use this to compute the generating function $\tilde{Z}$ using the formula \ref{eq:generating_tilde_Z}. This gives
\begin{align*}
\tilde{Z}(x) & = Q + (1-Q) \tilde{G}(1-T_1(1-x), 1-T_2(1-x)) \\
& = Q + (1-Q) \frac{ N_1 + N_2 (1-T_1(1-x)) }{N_1 + N_2} (1-T_1(1-x))^{N_1-1}e^{-N_2(1-1+T_2(1-x))} \\
& = Q + (1-Q) \frac{ N_1 + N_2 - N_2 T_1(1-x) }{N_1 + N_2} (1-T_1(1-x))^{N-1}e^{-N_2 T_2(1-x)} \\
& = Q + (1-Q) \left( 1-  \frac{ N_2 }{N_2+N_1} T_1(1-x) \right) (1-T_1(1-x))^{N_1-1}e^{-N_2 T_2(1-x)} ,
\end{align*}
and using it we can compute the basic reproduction number $R_0$, which by Proposition \ref{prop:R_0} is
\begin{align*}
R_0 & = (1-Q) \frac{N_2 N_1 }{N_2 + N_1} \left( \frac{N_1-1}{N_2} T_1 + \frac{N_2}{N_1} T_2 + (T_1+T_2) \right),
\end{align*}
or perhaps in a somewhat more suggestive manner
\begin{align*}
R_0 & = (1-Q) \left( N_2 T_2 +  \left(N_1-\frac{N_1}{N_2 + N_1}\right) T_1 \right).
\end{align*}
Notice in particular that $R_0$ scales homogeneously with degree $1$ as a function of $N_1,N_2,T_1,T_2$.

\begin{remark}
	Notice that in the case $N_1=0$, i.e. if only the sporadic contacts occur, we have $R_0=(1-Q)N_2T_2$, while if $N_2=0$ meaning that all sporadic contacts are cut out, then $R_0=(1-Q)(N_1-1)T_1$.
\end{remark}

Also, by evaluating the generating function at $x=0$ we find
$$\tilde{Z}(0)=Q + (1-Q) \left( 1-  \frac{ N_2 }{N_2+N_1} T_1 \right) (1-T_1)^{N_1-1}e^{-N_2 T_2}.$$
It then  follows from the discussion in section \ref{sec:Lower_Bound} that
$$\bP_\infty > Q + (1-Q) \left( 1-  \frac{ N_2 }{N_2+N_1} T_1 \right) (1-T_1)^{N_1-1}e^{-N_2 T_2},$$
and 
$$\bP > Z(\tilde{Z}(0)) > Z(0) = Q + (1-Q)(1-T_1)^{N_1} e^{-N_2 T_2}.$$

\subsection{Polynomial and Poisson I}\label{ss:Example2}

In this second model we will elaborate slightly on the first example, in the sense that we still assume everyone to establishes sporadic contacts following a Poisson distribution with intensity $N_2$. On the other hand, we shall encode the distribution of close contacts by a polynomial of degree $N \geq 1$. Furthermore, let $N_1$ be the average degree of the distribution of close contacts, i.e. $N_1:=G_1'(1)$. In formulas, we have
$$P_1(k_1)= \sum_{i=1}^{N} \delta_{i k_1} p_{k_1} , \ \text{and} \ P_2(k_2)=\frac{N_2^{k_2}}{k_2!} e^{-N_2}. $$
As in the previous case, we can now find the generating function for these degree distributions
$$G_1(x_1)=p_0 + p_1 x + \ldots + p_{N_1}x_1^{N_1} , \ \text{and} \ G_2(x_2)= e^{-N_2(1-x_2)}, $$
and compute $G(x_1,x_2)=G_1(x_1)e^{-N_2(1-x_2)}$. Then,
\begin{align*}
Z(x) & = Q + (1-Q) G_1(1-T_1(1-x)) e^{-N_2 T_2(1-x)},
\end{align*}
and to compute $\tilde{Z}(x)$ we start by obtaining the generating function $\tilde{G}(x_1 , x_2)$ for the ramification distribution.Equation \ref{eq:tilde_G_2} yields
\begin{align*}
\tilde{G}(x_1,x_2) & = \frac{ (\log G_1(x_1))' + N_2  }{N_1 + N_2} G_1(x_1) e^{-N_2(1-x_2)}.
\end{align*}
Finally using equation \ref{eq:generating_tilde_Z} as before, we find
\begin{align*}
\tilde{Z}(x) & = Q + (1-Q) \frac{ G_1'(1-T_1(1-x)) + N_2 G_1(1-T_1(1-x)) }{N_1 + N_2} e^{-N_2 T_2(1-x)} ,
\end{align*}
from which we can compute $R_0$ using Proposition \ref{prop:R_0}, i.e. using the formula $\tilde{Z}'(1)$. This requires computing the derivative of $\tilde{Z}$ which reads
\begin{align*}
\tilde{Z}'(x) & =  (1-Q)T_1 \frac{ G_1''(1-T_1(1-x)) + N_2 G_1'(1-T_1(1-x)) }{N_1 + N_2} e^{-N_2 T_2(1-x)} \\
& \ \ + (1-Q) N_2 T_2 \frac{ G_1'(1-T_1(1-x)) + N_2 G_1(1-T_1(1-x)) }{N_1 + N_2} e^{-N_2 T_2(1-x)} ,
\end{align*}
and so
\begin{align*}
R_0 & =  (1-Q)T_1 \frac{ G_1''(1) + N_2 N_1 }{N_1 + N_2}  + (1-Q) N_2 T_2  .
\end{align*}

\begin{remark}\label{rem:Second_Moment}
In order to evaluate $R_0$ more clearly, we must understand what is $G_1''(1)$. This can be computed for any distribution $\lbrace P(k) \rbrace_{k \in \mathbb{N}_0}$ with generating function $G(x)=\sum_{k=0}^{+\infty} P(k)x^{k}$. Indeed, from direct differentiation
\begin{align*}
G''(1) & = \sum_{k=2}^{+\infty} k(k-1) P(k)  =  \sum_{k=0}^{+\infty} k^2 P(k)-  \sum_{k=0}^{+\infty} k P(k) = \bE(k^2)-\bE(k),
\end{align*}
as the $k=0$ terms vanish an the $k=1$ terms cancel.
\end{remark}

Inserting this into the previous formula for $R_0$ yields
\begin{align*}
R_0 & = (1-Q) \left( \frac{\bE(k_1^2) + N_1(N_2-1) }{N_2 + N_1} T_1 + N_2 T_2  \right).
\end{align*}
Notice in particular that $R_0$ scales homogeneously with degree $1$ as a function of $N_1,N_2,T_1,T_2$.

We turn now to compute lower bounds on $\bP$ and $\bP_\infty$ following the strategy of section \ref{sec:Lower_Bound}, from which we infer that
$$\bP_\infty > \tilde{Z}(0)=Q + (1-Q) \frac{ G_1'(1-T_1) + N_2 G_1(1-T_1) }{N_1 + N_2} e^{-N_2 T_2},$$ 
while
$$\bP > Z(\tilde{Z}(0)) > Z(0) = Q + (1-Q) G_1(1-T_1) e^{-N_2 T_2}.$$

\subsection{Polynomial and Poisson II}\label{ex:3}

We shall now continue with a population organized as in the previous example but we assume that a fraction $f$ of the population decides to social isolate and cut its sporadic contacts. One can imagine this can be done by not going into public transport, reducing contact with unknown people and working from home. To implement this we modify the previous example by writing
$$P_1(k_1)= \sum_{i=1}^{N} \delta_{i k_1} p_{k_1} , \ \text{and} \ P_2(k_2)= f \delta_{0k_2} + (1-f)\frac{N_2^{k_2}}{k_2!} e^{-N_2}. $$
As in the previous case, we can now find the generating function for these degree distributions
$$G_1(x_1)=p_0 + p_1 x + \ldots + p_{N_1}x_1^{N_1} , \ \text{and} \ G_2(x_2)= f + (1-f)e^{-N_2(1-x_2)}, $$
and compute $G(x_1,x_2)=G_1(x_1) (f + (1-f)e^{-N_2(1-x_2)})$. Then,
\begin{align*}
Z(x) & = Q + (1-Q) G_1(1-T_1(1-x)) (f + (1-f)e^{-N_2T_2(1-x)}),
\end{align*}
and
\begin{align*}
\tilde{G}(x_1,x_2) & = f \frac{ G_1'(x_1) }{N_1 + N_2} + (1-f) \frac{G_1'(x_1) + N_2 G_1(x_1)}{N_1+N_2} e^{-N_2(1-x_2)},
\end{align*}
from which we can compute
\begin{align*}
\tilde{Z}(x) & = Q + (1-Q) f \frac{ G_1'(1-T_1(1-x)) }{N_1 + N_2} \\
& \ \ + (1-Q) (1-f) \frac{G_1'(1-T_1(1-x)) + N_2 G_1(1-T_1(1-x))}{N_1+N_2} e^{-N_2T_2(1-x)} ,
\end{align*}
from which we can compute $R_0$ using Proposition \ref{prop:R_0}, i.e. using the formula $\tilde{Z}'(1)$. This requires computing the derivative of $\tilde{Z}$ which reads
\begin{align*}
\tilde{Z}'(x) & =  Q + (1-Q) f T_1 \frac{ G_1''(1-T_1(1-x)) }{N_1 + N_2} \\
& \ \ +  (1-Q) (1-f) T_1 \frac{G_1''(1-T_1(1-x)) + N_2 G_1'(1-T_1(1-x))}{N_1+N_2} e^{-N_2T_2(1-x)}  \\
& \ \ + (1-Q) (1-f) N_2 T_2 \frac{G_1'(1-T_1(1-x)) + N_2 G_1(1-T_1(1-x))}{N_1+N_2} e^{-N_2T_2(1-x)}
\end{align*}
and evaluating this at $x=1$ yields
\begin{align*}
R_0 & =  Q + (1-Q) f T_1 \frac{ G_1''(1) }{N_1 + N_2} \\
& \ \ +  (1-Q) (1-f) \left( T_1 \frac{G_1''(1) + N_2 G_1'(1)}{N_1+N_2}  + N_2 T_2 \frac{G_1'(1) + N_2 G_1(1)}{N_1+N_2} \right) \\
& =  Q + (1-Q) f T_1 \frac{ G_1''(1) }{N_1 + N_2}  +  (1-Q) (1-f) \left( T_1 \frac{G_1''(1) + N_1 N_2}{N_1+N_2} + N_2 T_2 \frac{N_1 + N_2}{N_1+N_2} \right).
\end{align*}
As before, if $R_0>1$, we find the lower bounds
$$\bP_\infty > \tilde{Z}(0) = Q + (1-Q) \left( f \frac{ G_1'(1-T_1) }{N_1 + N_2}  + (1-f) \frac{G_1'(1-T_1) + N_2 G_1(1-T_1)}{N_1+N_2} e^{-N_2T_2} \right),$$
and 
$$\bP>Z(\tilde{Z}(0))>Z(0)= Q + (1-Q) G_1(1-T_1) (f + (1-f)e^{-N_2T_2}).$$

\subsection{Poisson and Poisson}\label{ex:4}

In this final example we split the contacts again in two groups, the familiar close interactions and distant sporadic ones. In contrast to the previous examples we shall assume a Poisson distribution for both of these classes of contacts having intensity $N_1$ and $N_2$ respectively. As in example \ref{ex:3} we will be assuming that a fractions of the population are isolating by cutting their contacts. To work with some generality we will assume that fraction $f_1$ of the population cuts its close contacts and another fraction $f_2$\footnote{It is probably reasonable to assume that $f_2\geq f_1$.} cuts its sporadic contacts. Then, the degree distributions of the relevant trees $\cT_1$ and $\cT_2$ are
$$P_1(k_1)= f_1 \delta_{0k_1} + (1-f_1)\frac{N_1^{k_1}}{k_1!} e^{-N_1} , \ \text{and} \ P_2(k_2)= f_2 \delta_{0k_2} + (1-f_2)\frac{N_2^{k_2}}{k_2!} e^{-N_2}, $$
with the corresponding generating functions being
$$G_1(x_1)=f_1 + (1-f_1)e^{-N_1(1-x_1)} , \ \text{and} \ G_2(x_2)= f_2 + (1-f_2)e^{-N_2(1-x_2)}, $$
and compute $G(x_1,x_2)= (f_1 + (1-f_1)e^{-N_1(1-x_1)}) (f_2 + (1-f_2)e^{-N_2(1-x_2)})$. Then,
\begin{align*}
Z(x) & = Q + (1-Q) (f_1 + (1-f_1)e^{-N_1T_1(1-x)}) (f_2 + (1-f_2)e^{-N_2T_2(1-x)}),
\end{align*}
while
\begin{align*}
\tilde{Z}(x) & = Q + (1-Q) \sum_{i=1}^2 \frac{N_i(1-f_i) f_j e^{-N_i T_i (1-x)}}{N_1(1-f_1)+N_2(1-f_2)} \\
& \ \ + (1-Q) (1-f_1)(1-f_2) \frac{N_1+N_2}{N_1(1-f_1)+N_2(1-f_2)} e^{-N_1T_1(1-x)-N_2T_2(1-x)} ,
\end{align*}
where in the first sum $j \neq i$. We now compute $R_0$ using Proposition \ref{prop:R_0} which requires computing $\tilde{Z}'(1)$. This yields
\begin{align*}
R_0 & = (1-Q) \left( (1-f_1) N_1 T_1 + (1-f_2)N_2T_2 \right) \\
& \ \ + (1-Q) \frac{(1-f_1)f_1 N_1^2 T_1 + (1-f_2)f_2 N_2^2 T_2 }{(1-f_1)N_1 + (1-f_2) N_2}.
\end{align*}
If $R_0>1$, we have the lower bounds
\begin{align*}
\bP_\infty > \tilde{Z}(0) & =  Q + (1-Q) \sum_{i=1}^2 \frac{N_i(1-f_i) f_j e^{-N_i T_i }}{N_1(1-f_1)+N_2(1-f_2)} \\
& \ \ + (1-Q) (1-f_1)(1-f_2) \frac{N_1+N_2}{N_1(1-f_1)+N_2(1-f_2)} e^{-N_1T_1-N_2T_2} ,
\end{align*}
and 
$$\bP>Z(\tilde{Z}(0))>Z(0)= Q + (1-Q) (f_1 + (1-f_1)e^{-N_1T_1}) (f_2 + (1-f_2)e^{-N_2T_2}).$$

\section{An alternative formula for $R_0$}\label{sec:R_0_2}

This section is motivated by the examples in previous and in finding a more amenable general formula to compute $R_0$. We start with Proposition \ref{prop:R_0}, namely the equation 
$$
R_0  = (1-Q) \sum_{j=1}^n T_j \frac{\partial \tilde{G}}{\partial x_j} \Big\vert_{x_1= \ldots = x_n=1},
$$
and the equation \ref{eq:tilde_G_2} which we rewrite here for simplicity
$$
\tilde{G}(x_1, \ldots , x_n) = \frac{\sum_{l=1}^n G_l'(x_l) \prod_{i \neq l}G_i(x_i)}{\sum_{l=1}^n G_l'(1) }. 
$$
Using this, we compute
\begin{align*}
\frac{\partial \tilde{G}}{\partial x_j} \Big\vert_{x_1= \ldots = x_n=1} & =  \frac{G_j''(x_j) \prod_{i \neq j}G_i(x_i) + \sum_{l \neq j}^n G_l'(x_l)G_j'(x_j) \prod_{i \neq l,j}G_i(x_i)}{\sum_{l=1}^n G_l'(1) } \Big\vert_{x_1= \ldots = x_n=1} \\
& = \frac{G_j''(1) + G_j'(1) \sum_{l \neq j}^n G_l'(1) }{\sum_{l=1}^n G_l'(1) } 
\end{align*}
and from the discussion in Remark \ref{rem:Second_Moment} we further find
\begin{align*}
\frac{\partial \tilde{G}}{\partial x_j} \Big\vert_{x_1= \ldots = x_n=1}& = \frac{\bE(k_j^2)-\bE(k_j) + \bE(k_j) \sum_{l \neq k}^n \bE(k_l) }{\sum_{l=1}^n \bE(k_l) } \\
& = \bE(k_j)\frac{	\frac{\bE(k_j^2)}{\bE(k_j)} - 1 - \bE(k_j) + \sum_{l=1}^n \bE(k_l) }{\sum_{l=1}^n \bE(k_l) } \\
& = \bE(k_j)\frac{	\frac{\bE(k_j^2) - \bE(k_j)^2}{\bE(k_j)} - 1  + \sum_{l=1}^n \bE(k_l) }{\sum_{l=1}^n \bE(k_l) } \\
& = \frac{	\bE(k_j^2) - \bE(k_j)^2 }{\sum_{l=1}^n \bE(k_l) } + \bE(k_j) \left( 1 -  \frac{1}{\sum_{l=1}^n \bE(k_l) } \right) \\
	& = \frac{ \sigma(k_j)^2 }{\sum_{l=1}^n \bE(k_l) } + \bE(k_j) \left( 1 -  \frac{1}{\sum_{l=1}^n \bE(k_l) } \right),
\end{align*}
where we have used $\sigma(k_j)= \sqrt{\bE(k_j^2) - \bE(k_j)^2 }$ to denote the standard deviation of the degree distribution of the tree $\cT_j$. Then, inserting this into the formula for $R_0$ yields the formulae in Theorem \ref{thm:Main_2} which we shall restate here for convenience.

\begin{theorem}
	For each $j \in \lbrace 1 , \ldots , n \rbrace$ let $\bE(k_j)$ denote the mean degree of the tree $\cT_j$ and $\sigma(k_j)$ its standard deviation. Suppose that a fraction $Q$ of all infected individuals is completely isolated and does not transmit the disease to anyone. Then, if the probability of transmitting the disease along an arm of the tree $\cT_j$ is $T_j \in [0,1]$, the basic reproduction number is
	\begin{align*}
	R_0  = (1-Q) \sum_{j=1}^n T_j \bE(k_j) \left( 1 -  \frac{1}{\sum_{l=1}^n \bE(k_l) } \right) + (1-Q) \sum_{j=1}^n T_j  \sigma(k_j) \frac{ \sigma(k_j) }{\sum_{l=1}^n \bE(k_l) },
	\end{align*}
	or
	\begin{align*}
	R_0  = (1-Q) \sum_{j=1}^n T_j \left( \bE(k_j) \left( 1 -  \frac{1}{\sum_{l=1}^n \bE(k_l) } \right) +  \frac{ \sigma(k_j)^2 }{\sum_{l=1}^n \bE(k_l) } \right),
	\end{align*}
\end{theorem}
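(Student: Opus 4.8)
The plan is to start from the two facts already in hand: Proposition~\ref{prop:R_0}, which writes
\[
R_0 = (1-Q)\sum_{j=1}^n T_j \left.\frac{\partial \tilde{G}}{\partial x_j}\right|_{x_1=\cdots=x_n=1},
\]
and the closed form for $\tilde{G}$ in terms of the single-variable generating functions $G_i$ recorded in equation~\ref{eq:tilde_G_2}. Since the transmission probabilities $T_j$ and the overall factor $(1-Q)$ are already explicit, the whole problem collapses to computing the first partial derivatives of $\tilde{G}$ at the all-ones point and re-expressing them through the first two moments of the degree distributions of the trees $\cT_j$.

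First I would differentiate the numerator of equation~\ref{eq:tilde_G_2}, namely $\sum_{l=1}^n G_l'(x_l)\prod_{i\neq l}G_i(x_i)$, with respect to $x_j$; the denominator $\sum_l G_l'(1)$ is a constant and plays no role here. The sum over $l$ splits into two kinds of contributions: the single diagonal term $l=j$, whose derivative is $G_j''(x_j)\prod_{i\neq j}G_i(x_i)$, and the off-diagonal terms $l\neq j$, in which $x_j$ enters only through the factor $G_j(x_j)$ sitting inside the product, so that the derivative is $G_l'(x_l)\,G_j'(x_j)\prod_{i\neq l,j}G_i(x_i)$. Keeping this diagonal/off-diagonal distinction straight, and being careful that the excluded index set in the product changes accordingly, is the one place where a bookkeeping slip is easy; I regard it as the main (though modest) obstacle.

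Next I would evaluate at $x_1=\cdots=x_n=1$. Using $G_i(1)=1$ (noted after the definition of the generating functions), every surviving product of $G_i$'s collapses to $1$, while $G_i'(1)=\bE(k_i)$. This leaves
\[
\left.\frac{\partial \tilde{G}}{\partial x_j}\right|_{x=1} = \frac{G_j''(1) + \bE(k_j)\sum_{l\neq j}\bE(k_l)}{\sum_{l=1}^n \bE(k_l)}.
\]

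Finally, I would rewrite $G_j''(1)$ via Remark~\ref{rem:Second_Moment}, which gives $G_j''(1)=\bE(k_j^2)-\bE(k_j)$, and expand $\sum_{l\neq j}\bE(k_l)=\sum_{l=1}^n\bE(k_l)-\bE(k_j)$ in the numerator. The cross term then contributes $-\bE(k_j)^2$, which combines with $\bE(k_j^2)$ to form the variance $\sigma(k_j)^2=\bE(k_j^2)-\bE(k_j)^2$, while the remaining terms reorganize into $\bE(k_j)\big(1-\tfrac{1}{\sum_l\bE(k_l)}\big)$. Substituting the resulting expression back into the formula of Proposition~\ref{prop:R_0} and summing over $j$ against $(1-Q)T_j$ produces both displayed forms of the theorem, the second being simply the first with the factor $\sigma(k_j)\cdot\sigma(k_j)$ written as $\sigma(k_j)^2$ and the two sums merged.
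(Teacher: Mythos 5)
Your proposal is correct and follows essentially the same route as the paper: differentiate the explicit formula for $\tilde{G}$ from equation \ref{eq:tilde_G_2}, split the derivative into the diagonal term $G_j''(x_j)\prod_{i\neq j}G_i(x_i)$ and the off-diagonal terms, evaluate at the all-ones point using $G_i(1)=1$ and $G_i'(1)=\bE(k_i)$, and then invoke Remark \ref{rem:Second_Moment} to convert $G_j''(1)$ into $\bE(k_j^2)-\bE(k_j)$ so the variance emerges after expanding $\sum_{l\neq j}\bE(k_l)$. This is exactly the computation carried out in Section \ref{sec:R_0_2}.
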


\section{Applications}

We shall now apply these results to a realistic scenario where a disease spreads through a population. Our goal is to investigate the possibility of an effective combination of isolation of infectious individuals, and practicing of social distancing, which together are capable of bringing $R_0$ below the threshold of $1$. When that is not possible we will compute the probability of an epidemic developing.

\subsection{Random contacts occurring with two different constant rates}

We shall use the setup of example \ref{ex:4} where the contacts established by the population in two groups. These are the close and distant contacts encoded in the trees $\cT_1$ and $\cT_2$ respectively. As in that example we assume both degree distributions to be Poisson, having intensities $N_1$ and $N_2$. The fractions of the population which are cutting their close contacts is $f_1$ and that cutting its sporadic contacts $f_2$. We will also be assuming that $f_2\geq f_1$ as it seems reasonable to assume that everyone which cuts its close contacts also cuts its sporadic ones. 

As computed in example \ref{ex:4}, the basic reproduction number is
\begin{align*}
R_0 & = (1-Q) \left( (1-f_1) N_1 T_1 + (1-f_2)N_2T_2 + \frac{(1-f_1)f_1 N_1^2 T_1 + (1-f_2)f_2 N_2^2 T_2 }{(1-f_1)N_1 + (1-f_2) N_2} \right).
\end{align*}
When no intervention is made all $f_1$, $f_2$ and $Q$ vanish the disease is free to propagate and the corresponding basic reproduction number will be denoted by
\begin{align*}
R_0^{free} & =  N_1 T_1 + N_2 T_2  .
\end{align*}

\begin{example}
	Suppose for the sake of simplicity that $f_1=f=f_2$. Then, $R_0$ can be reqritten in terms of $R_0^{free}$ as follows
	\begin{align*}
	R_0 & = (1-Q) \left( (1-f) (N_1 T_1 + N_2T_2) + f \frac{N_1^2 T_1 + N_2^2 T_2 }{N_1 + N_2} \right) \\
	& = (1-Q) \left( R_0^{free} - f \frac{N_1N_2}{N_1 + N_2} (T_1+T_2) \right).
	\end{align*}
	Then, the condition that $R_0<1$ which is sufficient to contain the outbreak turns into
	$$f>\frac{N_1+N_2}{N_1N_2} \frac{1}{T_1+T_2} \left( R_0^{free} - \frac{1}{1-Q} \right).$$
	For example, suppose that $Q=1/10$, $T_1=1/2$, $T_2=1/50$ and $N_1=4$, $N_2=50$. Then, $R_0^{free}=3$ which is actually a reasonable assumption a disease such as Covid-19 and the computation above yields $f>51/52$ which seems extremely difficult to achieve. On the other hand, if $Q=1/2$ while all other parameters remain the same $f>27/52 \approx 0.52$ which seems a much more achievable goal.
\end{example}

The previous example assumes that one can also cut the close contacts and that is not a reasonable assumption in most situations. For instance, many people may leave in the same house in which way it is not possible to cut such contacts. In the next example we address this and suppose only the sporadic contacts are cut.

\begin{example}
	Only cutting the sporadic contacts corresponds to having $f_1=0$. Then, $R_0$ can be is
	\begin{align*}
	R_0 & = (1-Q) \left( N_1 T_1 + (1-f_2)N_2T_2 + \frac{ (1-f_2)f_2 N_2^2 T_2 }{N_1 + (1-f_2) N_2} \right) \\
	& = (1-Q) \left( R_0^{free} -f_2N_2T_2 + \frac{ (1-f_2)f_2 N_2^2 T_2 }{N_1 + (1-f_2) N_2} \right) \\
	&  = (1-Q) \left( R_0^{free} -f_2 \frac{ N_1N_2T_2}{N_1 + (1-f_2) N_2} \right)
	\end{align*}
	and the condition that $R_0<1$ becomes
	$$f_2>\frac{N_1+N_2}{N_2}  \frac{ R_0^{free} - \frac{1}{1-Q} }{ R_0^{free} - \frac{1}{1-Q} +N_1 T_2}.$$
	We shall now use the same numerical values as in the previous example, $T_1=1/2$, $T_2=1/50$ and $N_1=4$, $N_2=50$. Then, we find from the previous computation that $f_2>1$ for any value of $Q\leq 1/2$. Hence, there is no way of surely containing the disease simply from cutting out the sporadic contacts and not increasing $Q$ above $1/2$. Suppose then the borderline case when $Q=1/2$ and consider the setting where also $f_2=1/2$ of the population is capable of cutting their sporadic contacts. Then, we have
	\begin{align*}
	Z(x) & = \frac{1}{2} + \frac{1}{4}e^{-2(1-x)} + \frac{1}{4} e^{-2502(1-x)} \\
	\tilde{Z}(x) & =\frac{1}{2} + \frac{1}{29}e^{-2(1-x)} + \frac{27}{58} e^{-2502(1-x)} ,
	\end{align*}
	from which we find
	$$\bP_\infty \approx 0.531 , \ \ \text{and} \ \ \bP \approx 0.594 ,$$
	i.e. the probability of an epidemic forming, which is given by $1-\bP$, is approximately $0.406$.
\end{example}

\subsection{Some country based analysis}

We shall now due some country based analysis using the household composition from the United Nations database \cite{UN}. This analysis is motivated by the current pandemic of Coronavirus. However, given the existence of insufficient knowledge to estimate the $T_i$ and the rude data from the beginning of the outbreak, this section should be regarded as within the realm of academic exercise. We shall use a simple model with two trees $\cT_1$ and $\cT_2$ with the first modeling the close interactions and given by the household composition of the respective country with the degree distribution being encoded in a generating function $G_1(x_1)$ which is then a polynomial. The second tree is intended to model the sporadic interactions in public gatherings. Assuming these to occur as a constant rate, the most suitable degree distribution is the Poisson of some intensity, say $N_2$, so that $G_2(x_2)=e^{-N_2(1-x_2)}$. Having this in mind, we shall use the model of sections \ref{ss:Example2} and \ref{ex:3} above. In all cases we shall consider, we have from those examples
$$G_1(x_1)=p_0 + p_1 x + \ldots + p_{5}x_1^{5} , \ \text{and} \ G_2(x_2)= f + (1-f) e^{-N_2(1-x_2)}, $$
so that writing $N_1=G_1'(1)$
\begin{align*}
Z(x) & = Q + (1-Q) G_1(1-T_1(1-x)) \left( f + (1-f)  e^{-N_2 T_2(1-x)} \right) \\
\tilde{Z}(x) & = Q + (1-Q) \frac{ G_1'(1-T_1(1-x))\left( f + (1-f)  e^{-N_2 T_2(1-x)} \right) + N_2 G_1(1-T_1(1-x)) e^{-N_2 T_2(1-x)} }{N_1 + N_2},
\end{align*}
and
\begin{align*}
R_0 & =  Q + (1-Q) f T_1 \frac{ G_1''(1) }{N_1 + N_2}  +  (1-Q) (1-f) \left( T_1 \frac{G_1''(1) + N_1 N_2}{N_1+N_2} + N_2 T_2 \frac{N_1 + N_2}{N_1+N_2} \right)  .
\end{align*}

\subsubsection{Germany}

We start with the case of Germany, which according to the United Nations database \cite{UN} has a household distribution in which $40 \% $ of the population live alone, $47 \%$ with one or two other individuals, $13\%$ with three or four other and $1\%$ with more than five other. The reader may note that we have not stated what is the exact percentage that live with only one or two other individuals. This is because such data is hard to locate for a large number of countries and we believe it will make little different in the qualitative, but also quantitative errors such as estimating the transmission probabilities $T_1$ and $T_2$. Indeed, there are more serious issues contributing to quantitative deviations. Thus, we will assume that half of the corresponding $47 \%$ live with one other individual and the other half with two other ones. Similar remarks hold for the $13\%$ of the population that lives with three or four other individuals. 

\begin{remark}
Of course, if all households remain isolated there is no way for the disease to spread from one household to another. However, we know this is not a realistic situation as in general there are close family ties connecting individuals in living in different houses. Given the difficulty in quantifying this we shall simply use the household composition to model the degree distribution of the tree $\cT_1$. 
\end{remark}

Based on the data mentioned above we shall assume that
$$G_1(x_1)=0.4 + 0.235 x_1 + 0.235 x_1^2+0.065 x_1^3+0.065 x_1^4+0.01 x_1^5.$$
As for the transmission rates, we will set $T_1=2/3$, i.e. the probability of infecting a close family member is $2/3$. In a similar way, we shall assume that the probability of infecting one of the sporadic contacts is $T_2=1/50$ and the average number of sporadic contacts is $N_2=100$. This includes everyone that an infected person stays next to in public transport, markets, restaurants, work and other common areas. Of course, these numbers are debatable and we have chosen these simply to illustrate the theory. Using them, and assuming that both $Q=0$ and $f=0$ we compute that $N_1=1.21$ and 
$$R_0 \approx 2.81 .$$
Then, iterating $\tilde{Z}(\bP_n)$ six times the sequence appears to stabilize around $\bP_\infty \approx 0.0852$ which gives
$$\bP \approx Z(\bP_\infty) \approx 0.0855 .$$
See figure \ref{fig:Germany} where the intersection point $\bP_\infty$ can be visualized graphically.
\begin{figure}[h]
	\centering
	\includegraphics[width=0.5\textwidth,height=0.3\textheight]{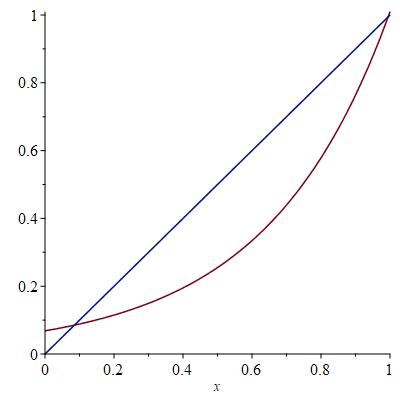}
	\caption{The function $\tilde{Z}(x)$ for Germany.}
	\label{fig:Germany}
\end{figure}
Hence, according to the model there is a very slight chance, of approximately $8.6\%$ that the outbreak will not lead to an epidemic. Of course, this assumes that no non-pharmaceutical interventions have been put in place to control the outbreak. Suppose for instance that strict social distancing outside the household is imposed so that everyone adheres to it, which corresponds to $f=1$. Then, a computation shows that $R_0 \approx 1.01 >1$ and so is not yet enough to almost surely guarantee that the outbreak will not lead to an epidemic. Also, getting all the population to cut its sporadic contacts seems very difficult to achieve in practice. A more efficient and easier to achieve strategy seems to be that of identifying and isolating infected individuals and we shall now analyze it. For example, motivated by the case of Covid-19 let us consider a disease for which around $17\%$ or $18\%$ of individuals are asymptomatic. These estimates where obtained for Covid-19 in airport screening and the data from the Diamond Princess cruise ship, see \cite{B}, \cite{Sun}. However, the validity of extrapolating thezse estimates to the remaining population is debatable as other studies seem to have quite disparate estimates for the fraction of asymptomatic carriers, see for instance \cite{M}. Let us say then, that at least $0.18$ of all infected individuals will not be detected, i.e $Q<1-0.18$. We choose, for simplicity $Q=0.7$ and $f=0$ which yields $R_0 \approx 0.84$ and the outbreak will almost surely be contained. As another example, suppose that $Q=0.6$ which is easier to achieve practically, then even with $f=0.9$ we have $R_0\approx 1.05 >1$ and so it would be needed $f>0.9$ which, again, is very difficult to achieve in practice. The conclusion is that, to contain the spread of the disease, it is much easier and effective to quarantine a sufficient fraction of infected individuals effectively than to simply cut the sporadic contacts of a large fraction of the entire population.

\subsubsection{Italy}

We shall now consider the Italian case, and based on \cite{UN} we shall assume that
$$G_1(x_1)=0.31+ 0.235 x_1 + 0.235 x_1^2 + 0.105x_1^3 + 0.105x_1^4 + 0.1x_1^5. $$
Then, using the same parameters as in the previous example we compute $R_0 \approx 3$ and $\bP_\infty \approx 0.068$ so that
$$\bP \approx Z(\bP_\infty) \approx 0.068,$$
i.e. there is chance of $6.8 \%$ that an outbreak can be avoided with taking any precautions. Still, as a matter of comparison we find that if $Q=0.7$ then $R_0 \approx 0.9$ which even though below $1$ is visibly higher than that of the previous example.

\subsubsection{France}

For modeling France, we set
$$G_1(x_1)=0.35 + 0.235 x_1 + 0.235 x_1^2 + 0.08 x_1^3 + 0.08 x_1^4 + 0.2 x_1^5. $$
which with the same parameters of the previous two examples yields $R_0 \approx 3.09$ and $\bP_\infty \approx 0.039$ which gives
$$\bP \approx Z(\bP_\infty) \approx 0.066,$$
yielding a probability of $6.6 \%$ to avoid an outbreak. When $Q=0.7$ we compute $R_0 \approx 0.92 $ which being below $1$ is larger than that of the previous two examples.

\subsubsection{Portugal}

For Portugal, the United Nations database \cite{UN} suggests using 
$$G_1(x_1)=0.19 + 0.28 x_1 + 0.28 x_1^2 + 0.115 x_1^3 + 0.115 x_1^4 + 0.2 x_1^5. $$
Again, with the same parameters used in the previous examples we find $R_0 \approx 3.4$ and $\bP_\infty \approx 0.04$. This gives
$$\bP \approx Z(\bP_\infty) \approx 0.04,$$
which yields a probability of $3.9 \%$ to avoid an outbreak. In this case, when $Q=0.7$ we find $R_0 \approx 1.019 $ which in contrast with the previous examples is already above $1$. Thus, by Theorem \ref{thm:R_0<1} the probability $\bP$ of avoiding an epidemic is below $1$. Nevertheless, a computation shows that $\bP \approx 0.984$ which is still quite high.

\subsubsection{Spain}

In the case of Spain we assume, using the same reference, that
$$G_1(x_1)=0.19 + 0.265 x_1 + 0.265 x_1^2 + 0.13 x_1^3 + 0.13 x_1^4 + 0.3 x_1^5. $$
Again, with the same parameters used in the previous examples we find $R_0 \approx 3.44$ and $\bP_\infty \approx 0.039$. This gives
$$\bP \approx Z(\bP_\infty) \approx 0.039,$$
which yields a probability of $3.9 \%$ to avoid an outbreak. In this case, when $Q=0.7$ we find $R_0 \approx 1.03 $ as in the previous example. Indeed, the whole situation is very much parallel as that of the previous example.

\subsubsection{Brazil}

For Brazil we have
$$G_1(x_1)=0.12 + 0.235 x_1 + 0.235 x_1^2 + 0.16 x_1^3 + 0.16 x_1^4 + 0.9 x_1^5, $$
and with the same parameters of the previous examples we find $R_0 \approx 3.82$ which is the highest so far. Using this distributions we compute $\bP_\infty \approx 0.026$ which has
$$\bP \approx Z(\bP_\infty) \approx 0.026,$$
which yields a very small probability of $2.6 \%$ to avoid an outbreak. In this case, even when $Q=0.7$ we find $R_0 \approx 1.146 $ which is still high. Indeed, the associated probability of avoiding an outbreak is $\bP \approx 0.92$, i.e. there is a chance of $92\%$ of containing the outbreak.

%
%
%

\end{document}